\theoremstyle{plain}
\newtheorem{proposition}{Proposition}
\renewenvironment{proof}{\noindent{\bfseries Proof.}}{\qed}
\theoremstyle{definition}
\newcommand{\inte}{{\rm int}}
\newcommand{\ben}{\begin{enumerate}}
\newcommand{\een}{\end{enumerate}}
\newcommand{\be}{\begin{equation}}
\newcommand{\ee}{\end{equation}}
\newcommand{\bery}{\begin{eqnarray}}
\newcommand{\eery}{\end{eqnarray}}
\newcommand{\R}{{\mathbb R}}
\begin{document}

\title{\textbf{A Multicriteria Macroeconomic Model with\\ Intertemporal Equity and Spatial Spillovers}}
\author{Herb Kunze\footnote{Department of Mathematics and Statistics, University of Guelph, Guelph, Canada. Email: hkunze@uoguelph.ca} \and{Davide La Torre\footnote{SKEMA Business School - Universit$\acute{e}$ C$\hat{o}$te d'Azur, Sophia Antipolis Campus, France. Email: davide.latorre@skema.edu}}
\and{Simone Marsiglio\footnote{Department of Economics and Management, University of Pisa, Pisa, Italy. Email: simone.marsiglio@unipi.it}}}

\maketitle

\begin{abstract}
We analyze a macroeconomic model with intergenerational equity considerations and spatial spillovers, which gives rise to a multicriteria optimization problem. Intergenerational equity requires to add in the definition of social welfare a long run sustainability criterion to the traditional discounted utilitarian criterion. The spatial structure allows for the possibility of heterogeneiity and spatial diffusion implies that all locations within the spatial domain are interconnected via spatial spillovers. We rely on different techniques (scalarization, $\epsilon$-constraint method and goal programming) to analyze such a spatial multicriteria problem, relying on numerical approaches to illustrate the nature of the trade-off between the discounted utilitarian and the sustainability criteria. \medskip

\noindent \textbf{Keywords:} Multicriteria Optimization; Intergenerational Equity; Spatial Diffusion
\end{abstract}

\section{Introduction}

After decades of debates a wide consensus on the effects of anthropogenetic activities on the environment has finally emerged, and even policymakers seem finally convinced that it is now time to act in order to ensure the long term sustainability of economic activities. Sustainability is a complicated notion to define but in its most widely accepted terms it requires some respect of natural resources and some efforts to ensure intergenerational equity (\cite{wced}). While the former aspect can be easily accounted for in macroeconomic analysis by adding an additional constraint to the standard optimization problem representing the society's planning device, the latter is more problematic since the criterion generally used in the definition of such a problem, the discounted utilitarianism, attaches less and less weight to future generations and so it cannot accommodate for intergenerational issues (\cite{Colapinto}). Different approaches to overcome such an issue have been proposed, but probably the most commonly used consists of extending the objective function to include an additional term representing somehow long term sustainablity considerations (\cite{Chinchilnisky}). The introduction of such an additional term in the optimization problem makes it a multiple objective problem in which the society needs to balance two conflicting goals, represented by short term and long term objectives, respectively; such an interpretation of a macroeconomic problem as a multicriteria problem allows to rely upon the operations research methods to analyze macroeconomic issues, bridging somehow these two important but distinct fields of research (\cite{MarsiglioLatorre}; \cite{MarsiglioPrivileggi}).

The goal of this paper is therefore to analyze from an operations research perspective a traditional macroeconomic model extended along two different directions. The first consists of allowing for long term sustainability considerations, in the form of intertemporal equity issues, which introduces a second criterion in the otherwise standard unicriterion macroeconomic optimization problem; several studies have analyzed in different contexts problems of this kind by focusing on pollution control and debt reduction settings (\cite{LaTorre2017}; \cite{LatorreMarsiglio}). The second consists of allowing for a spatial dimension in which different locations within the entire economy interact one another through the trade channel; several studies analyze problems of this kind, which generally are all unicriterion, by focusing on capital accumulation and environmental problems (\cite{LaTorre2015}; \cite{LaTorre2019}). Specifically, we consider a simple setting with intergenerational equity (as in \cite{Colapinto}) in which  capital evolves both over time and across space (as in \cite{Boucekkine2009}), and to the best of our knowledge our paper represents the first attempt to analyze a spatial macroeconomic problem from an operational research point of view.

Our paper relates thus to two very distinct literatures addressing the intertemporal equity issues associated with sustainabiility and the existence of spatial spillovers in macroeconomic geography contexts, respectively. Several papers discuss the problems embedded in the use of discounted utilitarianism as a social welfare criterion by proposing alternative criteria and discussing their limits in terms of applicability and in terms of existence of optimal solution paths (\cite{Ramsey}; \cite{Chinchilniskyetall}; \cite{Pezzey}; \cite{Lekama}; \cite{Arrow}; \cite{Heal}). Several other papers instead more recently discuss how the presence of a spatial dimension allows to characterize the possibility of spatial heterogeneity and spatial spillovers, along with their implications for macroeconomic outcomes (\cite{Brito}; \cite{Boucekkine2009}; \cite{Camacho2004}; \cite{Camacho2008}; \cite{LaTorre2015}). We bridge these two branches of the economics literature by developing a spatial version of the simplest intertemporal equity problem to show how an operations research approach can be used to inform our analysis.


The paper proceeds as follows. Section \ref{basic} recalls the basic definitions and properties of optimization theory with multiple objectives, discussing the main methods that we shall adopt in our analysis, that are scalarization, $\epsilon$-constraint method and goal programming. 
Section \ref{modelsec} presents our multicriteria problem showing how it can be reformulated from the points of view of scalarization, $\epsilon$-constraint method and goal programming. Section \ref{numericsec} determines the solution of the problem relying on scalarization illustrating the nature of the trade-off between the discounted utilitarian and the sustainability criteria and deriving the Pareto-frontier in a specific model's parametrization. Section \ref{numsec} presents some further numerical experiments by analyzing the problem via the $\epsilon$-constraint method and goal programming, under different model's parametrizations. Section \ref{sec:conc} as usual concludes.

\section{Preliminaries on Multiple Objective Optimization} \label{basic}

This section recalls some basic facts in  Multiple Objective Optimization (MOP). In a very abstract setting, a finite-dimensional MOP problem (see \cite{Sawaragi}) takes the form
\begin{eqnarray}
\label{AP}
\max_{x\in X} \ J(x)
\end{eqnarray}
where $(X,\|\cdot\|)$ is a Banach space and $J:X\to\R^p$ is a vector-valued functional. As usual we suppose that $\R^p$ is ordered by the classical Pareto cone $\R^p_+$. A point $x\in X$ is said to be {\em Pareto optimal} or {\em efficient} if $J(x)$ is one of the maximal elements of the set of achievable values $J(X)$. Thus a point $x$ is Pareto optimal if it is feasible and, for any possible $x'\in X$, $J(x)\le_{\R^p_+} J(x')$ implies $x=x'$. In a more synthetic way, a point $x\in X$ is said to be Pareto optimal if $(J(x) + \R^p_+) \cap J(X) = \{J(x)\}$. Because of its dimensionality and the existing of conflicting criteria, a MOP model is usually difficult to be solved and the determination of the entire or part of the Pareto frontier can be very complicated and computationally intensive. In particular this applies when the number of objectives is larger than two, leading to a higher-dimensional Pareto surface. To overcome this difficulties and reduce the model complexity, several techniques have been proposed in literature. The generation of the Pareto frontier can be accomplished through one of two predominant techniques: scalarization and vectorization methods. Among the scalarization techniques, the most frequently applied are linear scalarization, the $\epsilon$-constraint method, and goal programming. These techniques will be used in the sequel of this paper. 
The vectorization algorithms, instead, tackle the MOP model directly without transforming it into some equivalent single criterion model.

\subsection{Linear Scalarization}

The linear scalarization technique (or weighted sum) is probably the simplest and the most widely used technique to solve MOP problems
and it converts the MOP model into a family of parametric single criterion optimization models.
By using this approach, a multiple objective model can be reduced to a single criterion problem by summing up all criteria with different weights. More precisely, by linear scalarization a MOP model boils down to:
\begin{eqnarray}
\label{SP}
\max_{x\in X} \sum_{i=1}^p \beta_i J_i(x)
\end{eqnarray}
where $\beta$ is a vector taking values in the interior of $\R^p_+$, namely $\beta\in\inte(\R^p_+)$. Since the Pareto optimal solution depends on $\beta$, by modifying the weights $\beta$ different points on the Pareto optimal set can be found. Linear scalarization can also be applied to problems in which the ordering cone is different from the Pareto one. In this case, we have to rely on the elements of the dual cone to scalarize the problem.

If $J$ is a vector-valued concave functional, namely each component $J_i$ is concave, then the linear scalarized problem (\ref{SP}) is also concave. This means that one can find Pareto optimal points of a concave problem by solving a concave scalar optimization problem, and for each $\beta\in\inte(\R^p_+)$, different Pareto optimal points can be obtained. For concave problems the converse of this result is only partially true, since for each Pareto optimal point $\bar x$, there is a nonzero $\bar\beta\in\R^p_+$ such that $\bar x$ is a solution of the scalarized problem (\ref{SP}) with $\beta=\bar \beta$. This is stated in the following theorem.

\begin{proposition}
(\cite{Sawaragi})
Suppose that $D$ is convex and $J_i$ are concave for all $i=1...p$. Then for all Pareto optimal solutions $\bar x$ there exists $\beta\in\R^p_+$ such that
\begin{equation}
\bar x \in argmax_{x\in D} \left\{\sum_{i=1}^p \beta_i J_i(x) \right\}.
\end{equation}
\end{proposition}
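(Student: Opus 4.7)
The plan is to apply a hyperplane separation argument in $\R^p$ to the image of the feasible set under $J$. The concavity of each $J_i$ together with convexity of $D$ will make the shadow set $G := J(D) - \R^p_+$ convex, while Pareto optimality of $\bar x$ will force $G$ to be disjoint from the open set $J(\bar x) + \inte(\R^p_+)$. Separating these two convex sets will then yield the required weight vector $\beta$ as the normal to the separating hyperplane.

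The first step is to verify that $G$ is convex: given $y_k = J(x_k) - r_k$ with $x_k \in D$ and $r_k \in \R^p_+$ for $k = 1, 2$, a convex combination $\lambda y_1 + (1-\lambda) y_2$ can be rewritten as $J(\lambda x_1 + (1-\lambda) x_2) - s$ for some $s \in \R^p_+$, exploiting componentwise concavity of $J$ and convexity of $D$. The second step is to check that $G \cap (J(\bar x) + \inte(\R^p_+))$ is empty: any point in the intersection would supply some $x \in D$ with $J(x) \geq y > J(\bar x)$ componentwise, contradicting Pareto optimality of $\bar x$.

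The third step is to invoke a standard convex separation theorem for a convex set and a disjoint open convex set, producing a nonzero $\beta \in \R^p$ and $c \in \R$ with $\beta^\top y \leq c \leq \beta^\top z$ for every $y \in G$ and every $z \in J(\bar x) + \inte(\R^p_+)$. Letting coordinates of $z$ tend to $+\infty$ along positive directions forces $\beta \in \R^p_+$ (otherwise the right-hand side would be driven to $-\infty$), and pushing $z \to J(\bar x)$ from within the cone pins down $c = \beta^\top J(\bar x)$. Since $J(x) \in G$ for every $x \in D$, the left inequality becomes $\beta^\top J(x) \leq \beta^\top J(\bar x)$ throughout $D$, which is precisely the claim $\bar x \in argmax_{x \in D} \left\{\sum_{i=1}^p \beta_i J_i(x)\right\}$.

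The main obstacle, and the reason the conclusion is not sharper, lies in the separation step: only weak separation is available, so $\beta$ need not belong to $\inte(\R^p_+)$ but only to $\R^p_+ \setminus \{0\}$. This asymmetry is exactly why the converse of linear scalarization stated here is weaker than the forward direction discussed above: for Pareto optimal points lying on the boundary of the efficient frontier, some weights $\beta_i$ may vanish, and without additional hypotheses such as proper efficiency the conclusion cannot be strengthened to $\beta \in \inte(\R^p_+)$.
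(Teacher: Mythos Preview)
Your argument is correct and is precisely the classical separation proof one finds in the reference the paper cites. Note, however, that the paper does not supply its own proof of this proposition at all: it merely attributes the result to Sawaragi, Nakayama and Tanino and states it without argument. Your write-up therefore fills in what the paper omits, and it does so along the standard route---form the convex ``shadow'' set $G = J(D) - \R^p_+$, observe that Pareto optimality of $\bar x$ makes $G$ disjoint from $J(\bar x) + \inte(\R^p_+)$, separate, and read off $\beta \in \R^p_+ \setminus \{0\}$ from the unboundedness of the cone. The closing remark about why only $\beta \in \R^p_+$ (rather than $\beta \in \inte(\R^p_+)$) is obtainable is also accurate and matches the caveat the paper itself makes just before stating the proposition.
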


If the linear scalarization method is used for non-concave problems, the Pareto frontier generated will be incomplete and the Decision Maker (DM) will have a non-complete set of possible solutions. In this case, other scalarization methods can be found in literature and one which is worth to be mentioned is the \textit{Chebyshev scalarization model}.

\subsubsection{$\epsilon$-constraint method}

The second model that is proposed to solve the vector-valued problem is the \textit{$\epsilon$-constraint method}.
In this method, one of the objective functions is selected to be optimized while the others are converted into additional constraints.
The method is an hybrid methodology, in fact for the $\{J_i\}_{i\neq k}$, least acceptable levels, $\epsilon_i$ have to be set while the remaining objective function $J_k$ is optimized. Then the decision maker plays a crucial role in this setting, by choosing which objective function to optimize and the least acceptable levels for the objective functions to add as constraints. Under this method, the original vector-valued problem can be now written as:

\begin{equation}
\max J_k(x)
\end{equation}
subject to:
\begin{equation}
\label{epsilon_system}
\left\{
\begin{array}{ll}
J_i(x) \geq \epsilon_i \ \ \ \ i\neq k\\
x\in X
\end{array}
\right.
\end{equation}

This method has the advantage of being theoretically able to identify Pareto optimal points also of non-convex problems. However, it also has two potential drawbacks: the identified optimal point is only granted to be weakly Pareto optimal, and the problem might become unfeasible due to the additional constraints.

\subsection{Goal Programming}

The \textit{Goal Programming} (or GP approach) is another widely used method to deal with vector-valued problems (see \cite{Charnesetal} \cite{Charnes}).
With respect to other scalarization techniques, the idea behind this model is the determination of the aspiration levels of an objective function. A GP model does not try to find an optimal solution but an acceptable one, as it tries to achieve the goals set by the DM rather than maximizing or minimizing the objective functions. However, an optimization procedure is involved anyway. Within this formulation, one tries to minimize any possible deviation from the objective goals, either positive or negative. In fact, the GP model is a distance-function model in which the obtained optimal solution represents the best compromise between different objectives. Since the introduction of this methodology for MOP problems, many variants have been presented in literature. Among them, the most popular one is the \textit{Weighted Goal Programming} (\cite{Aouni}) which reads as follows: Given a set of ideal goals \textit{$g_i$}, with \textit{$i=1, \dots,p$}, chosen by the DM, solve the following program:
\begin{equation}
\min \sum_{i=1}^p \theta_i^+ \delta_i^+ + \theta_i^- \delta_i^- \nonumber
\end{equation}
Subject to:
\begin{equation}
\label{GPsystem}
\left\{
\begin{array}{l}
J_i(x)+\delta_i^- -\delta_i^+ = g_i   \ \ \ \  i=1,\dots, p\\
\delta_i^-,\delta_i^+ \geq 0\ \ \ \ \forall i=1,\dots,p\\
x\in X
\end{array}
\right.
\end{equation}
where $\delta_i^+$, $\delta_i^-$ are the positive and negative deviations (slack variables), respectively, and $\theta_i^+$,  $\theta_i^-$ are the corresponding weights. Due to its simplicity, the GP model and its variants have been widely applied to different areas such as accounting, marketing, human resources, production, and so on. A negative counterpart of the GP model, that it is important to consider, is the ability of GP to produce solutions that are not Pareto efficient. To overcome this difficulty, in order to produce Pareto optimal solutions the GP model is implemented within a two-steps algorithm: At first, the GP solution is tested for Pareto efficiency and, if it is not efficient, a restoration or projection method is proposed to restore efficiency at the second step.

\section{The Model} 
\label{modelsec}

The simplest macroeconomic setting to account for intergenerational equity (\cite{MarsiglioLatorre}; \cite{MarsiglioPrivileggi}) and spatial links and spillovers (\cite{Brito}; \cite{Boucekkine2009}) consists of an optimization problem in which the social planner, by considering the evolution of the capital stock $K(x,t)$, tries to determine the optimal level of consumption $C(x,t)$ over time $t\in[0,T]$ and across space $x\in\Omega$ to maximize the vector-valued social welfare which is composed of two terms. The first term represents the discounted utility stream of the representative individuals located in different venues within the entire spatial economy $J_1=\int_0^T \int_\Omega U(C(x,t))e^{-\rho t} dx dt$, where the utility function depends on consumption and the discount factor is $\rho>0$. The second term represents the payoff of the representative individuals located in different venues within the entire spatial economy at the end of the planning horizon, $J_2=\int_{\Omega} K(x,T)$ where the payoff function depends on the final capital level. The society's optimization problem can therefore be stated as a bi-criteria problem as follows:
\begin{equation}
\label{W}
\max_{C(x,t)} W=\left(\int_0^T \int_\Omega U(C(x,t))e^{-\rho t} dx dt, \int_{\Omega} K(x,T) dx\right)
\end{equation}
Subject to
\begin{equation}
\label{eq_system1}
\small{\left\{
\begin{array}{ll}
{\partial K(x,t)\over \partial t} = \nabla (d(x) \nabla K(x,t)) + f(K(x,t), C(x,t)), &  (x,t)\in \Omega\times (0,T)\\
d(x) {\partial K\over \partial n}(x) = 0, &  x\in\partial \Omega\\
K(x,0) = K_0(x). & x\in \Omega
\end{array}
\right.}
\end{equation}
where the spatio-temporal evolution of capital in (\ref{eq_system1}) is driven by the functions $f(K(x,t), C(x,t))$ with $f_K>0$ and $f_C<0$, and $\nabla (d(x) \nabla K(x,t))$ which quantifies the process of spatial diffusion between different locations within the spatial domain. Note that the function $f(K(x,t), C(x,t))$ determines the nature of the trade-off between the two terms in (\ref{W}). The first term represents the traditional criterion considered in macroeconomic analysis, characterizing the short termism of policymaking which leads the society to care about the wellbeing of the current generation. The second term represents instead a sustainability criterion introduced to account for the wellbeing of future generations by allowing the society to determine today how many resources not to exploit in order to preserve them for the future. Clearly, a higher consumption today increases the first criterion and through its subtraction of resources to capital accumulation decreases the second one. Therefore, the society needs to balance these two conflicting criteria determining the optimal dynamic path of consumption.

The above maximization has to be understood in a Pareto sense and with respect to the Pareto order in $\R^2$: Given $a,b\in\R^2$, $a\leq b$ if and only if $a_i\le b_i$, $i=1,2$. So, in other words, a feasible pair $(\bar K(x,t),\bar C(x,t))$ that solves (\ref{eq_system1}) is optimal if there is no other feasible pair $(K(x,t),C(x,t))$ such that$(J_1(K(x,t),C(x,t)),J_2(K(x,t),C(x,t)))$ dominates $(J_1(\bar K(x,t),\bar C(x,t)),J_2(\bar K(x,t),\bar C(x,t)))$. This can be rewritten by stating that there exists no feasible $(K(x,t),C(x,t))$ such that
$$
(J_1(\bar K,\bar C),J_2(\bar K,\bar C)) \in (J_1(K,C),J_2(K,C))\mathcal{} + \inte(\R^2).
$$
Using the three approaches presented in the previous section to reduce a multiple objective problem to a single objective model, we can define the following three different single-criterion formulations.

\subsection{Model I}

In this first fomulation the two criteria are combined together through scalarization weights. The DM decides the value of a trade-off parameter $\Theta\in (0,1)$, that expresses the relative importance of first criterion with respect to the second one. The parameter $\Theta$ is then expressing the level of priority and preference that each criterion has for the DM. The scalarized problem reads as
$$
\max J_\Phi(C,K) := (1-\Phi) J_1(C,K) + \Phi J_2(C,K)
$$
Subject to
$$
\small{\left\{
\begin{array}{ll}
{\partial K(x,t)\over \partial t} = \nabla (d(x) \nabla K(x,t)) + f(K(x,t), C(x,t)), &  (x,t)\in \Omega\times (0,T)\\
d(x) {\partial K\over \partial n}(x) = 0, &  x\in\partial \Omega\\
K(x,0) = K_0(x). & x\in \Omega
\end{array}
\right.}
$$
where
$$
J_1(C,K)=\int_0^T \int_\Omega U(C(x,t))e^{-\rho t} dx dt
$$
and
$$
J_2(C,K)= \int_{\Omega} K(x,T)dx
$$
If we replace $\Theta={\Phi\over 1-\Phi}$ then the objective function boils down to:
$$
\max J_\Theta(C,K) := J_1(C,K) + \Theta J_2(C,K)
$$
We will see in the following section dedicated to a numerical simulation that, by varying over $\Theta\in (0,+\infty)$, it is possible to reconstruct the Pareto frontier. In fact the objective function is concave in $(C,K)$ for any $\Theta\in (0,+\infty)$.

\subsection{Model II}

In this second formulation we proceed by using the $\epsilon$-constraint method. Within this approach there are essentially two different formulations that can be proposed, once again this decision being dependent on the relative importance of each criterion for the DM.
If DM considers the intertemporal utility as the main criterion and supposes that a certain level $\epsilon$ of the criterion has to be attained, the model boils down to:
\begin{equation}
\max J_\epsilon(C,K) := \int_0^T \int_\Omega U(C(x,t))e^{-\rho t} dx dt\\
\end{equation}
Subject to
\begin{equation}
\label{eq_system2}
\small{\left\{
  \begin{array}{ll}
    \int_{\Omega} K(x,T) dx \ge \epsilon, & \\
    {\partial K(x,t)\over \partial t} = \nabla (d(x) \nabla K(x,t)) + f(K(x,t), C(x,t)), &  (x,t)\in \Omega\times (0,T)\\
    d(x) {\partial K\over \partial n}(x) = 0, &  x\in\partial \Omega\\
    K(x,0) = K_0(x). & x\in \Omega
  \end{array}
\right.}
\end{equation}

By varying $\epsilon$ it is possible to obtain different points of the Pareto frontier. If, instead, the DM is more interested in the
sustainability criterion, the model can be written as
\begin{equation}
\max J_\epsilon(C,K) := \int_{\Omega} K(x,T) dx\\
\end{equation}
Subject to
\begin{equation}
\label{eq_system3}
\small{\left\{
  \begin{array}{ll}
  \int_0^T \int_\Omega U(C(x,t))e^{-\rho t} dx dt\ge \epsilon, & \\
    {\partial K(x,t)\over \partial t} = \nabla (d(x) \nabla K(x,t)) + f(K(x,t), C(x,t)), &  (x,t)\in \Omega\times (0,T)\\
    d(x) {\partial K\over \partial n}(x) = 0, &  x\in\partial \Omega\\
    K(x,0) = K_0(x). & x\in \Omega
  \end{array}
\right.}
\end{equation}

\subsection{Model III}

In this third formulation we rely on a GP approach. Let us suppose that $g_1$ and $g_2$ are the two goals associated with the two criteria $J_1$ and $J_2$, respectively. Within this technique, the DM is not interested in the maximization process tout court, but instead in the achievement levels of the two criteria 
$J_1$ and $J_2$.  The previous model, reformulated using the GP technique, reads as

\begin{equation}
\min J_{g_1,g_2} := \sum_{i=1}^2 \theta_i^+ \delta_i^+ + \theta_i^- \delta_i^- \nonumber
\end{equation}
Subject to
\begin{equation}
\label{eq_system3}
\small{\left\{
  \begin{array}{ll}
  \int_{\Omega} K(x,T) dx - \delta_1^+ + \delta_1^- = g_1 \\
    \int_0^T \int_\Omega U(C(x,t))e^{-\rho t} dx dt - \delta_2^+ + \delta_2^- = g_2, & \\
    {\partial K(x,t)\over \partial t} = \nabla (d(x) \nabla K(x,t)) + f(K(x,t), C(x,t)), &  (x,t)\in \Omega\times (0,T)\\
    d(x) {\partial K\over \partial n}(x) = 0, &  x\in\partial \Omega\\
    K(x,0) = K_0(x). & x\in \Omega
  \end{array}
\right.}
\end{equation}

\section{Computational Results}\label{numericsec}

In this section we present a numerical implementation of Model I introduced above and we determine an approximation of the Pareto frontier. The next section, instead, will present other possible numerical implementations of Models II and III.
Before doing that, we specify the components of this model and, in particular, the form of the function $f(K(x,t),C(x,t))$. It is a traditional Ramsey-type optimal control problem, extendeto account for intergenerational equity issues (\cite{Chinchilnisky}; \cite{Colapinto}) along with spatial heterogeneity and spillovers (\cite{Boucekkine2009}; \cite{LaTorre2015}). The economy develops along a linear city and the social planner wishes to maximize the vector-valued social welfare by choosing the level of consumption in each location, which in turn determines the evolution of capital in each location and in the whole economy. Capital accumulation depends on the difference between net (of depreciation, where the depreciation rate is $\delta_K>0$) production, $Y(x,t)= AK(x,t)^\alpha$ and consumption, augmented for the inflows of capital from other locations; these flows are captured by a diffusion term $\nabla \left(d(x) \nabla K(x,t) \right)$, where $d(x)$ is the diffusion parameter. Given the initial condition, $K(x,0) = K_0(x)$, the problem can be summarized as follows:
\begin{eqnarray}
\max & &\left(\int_0^T \int_\Omega U(C(x,t))e^{-\rho t} dx dt, \int_{\Omega} K(x,T) dx\right) \label{W1}
\end{eqnarray}
Subject to
\begin{equation}
\label{eq_system1}
\small{\left\{
  \begin{array}{ll}
    {\partial K(x,t)\over \partial t} = \nabla \left(d(x) \nabla K(x,t) \right) + A K(x,t)^\alpha - \delta_K K(x,t) - C(x,t), & (x,t)\in \Omega\times (0,T) \\
    d(x) {\partial K\over \partial n}(x) = 0, &  x\in\partial \Omega\\
    K(x,0) = K_0(x). & x\in \Omega
  \end{array}
\right.}
\end{equation}

We first discuss how to solve numerically the scalarized problem:
$$
\max J_\Theta(C,K) := J_1(C,K) + \Theta J_2(C,K)
$$
Subject to
$$
\small{\left\{
  \begin{array}{ll}
    {\partial K(x,t)\over \partial t} = \nabla \left(d(x) \nabla K(x,t) \right) + A K(y,t)^\alpha dy - C(x,t)- \delta_K K(x,t), & (x,t)\in \Omega\times (0,T) \\
    d(x) {\partial K\over \partial n}(x) = 0, &  x\in\partial \Omega\\
    K(x,0) = K_0(x). & x\in \Omega
  \end{array}
\right.}
$$
To determine an optimal policy result, let us define the current Hamiltonian function as
$$
H(C,K,\lambda) = U(C) + \lambda \left(\nabla (d(x) \nabla K(x,t) + A K(x,t)^\alpha - \delta K(x,t) - C(x,t)\right)
$$
The following proposition provides the optimality conditions for an optimal solution of the problem above.

\begin{proposition}
Suppose that $U(C)$ is a concave function of $C$. Then a pair $(\tilde C, \tilde K)$ solves the above optimal control model if and only if it is solution to the following Hamiltonian system:
$$
{\left\{
  \begin{array}{ll}
    {\partial K(x,t)\over \partial t} = \nabla \left(d(x) \nabla K(x,t) \right) + A K(x,y)^\alpha - \delta_K K(x,t) - C(x,t), & (x,t)\in \Omega\times (0,T) \\
    {\partial \lambda(x,t)\over \partial t} = \rho \lambda - \nabla \left(d(x) \nabla \lambda(x,t) \right) - \lambda \alpha A K^{\alpha-1}(x,t) - \delta_K \lambda, & (x,t)\in \Omega\times (0,T)\\
    U'(C) = \lambda & (x,t)\in \Omega\times (0,T)\\
    d(x) {\partial K\over \partial n}(x) = 0, & x\in\partial \Omega\\
    d(x) {\partial \lambda\over \partial n}(x) = 0, & x\in\partial\Omega \\
    \lambda(x,T) = \Theta & x\in \Omega \\
    K(x,0) = K_0(x) & x\in \Omega
  \end{array}
\right.}
$$
\end{proposition}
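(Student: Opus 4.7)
The plan is to treat this as a PDE-constrained optimal control problem and apply a Pontryagin-type maximum principle for distributed parameter systems. Since the objective is concave (by concavity of $U$ together with the concavity of the production term $AK^\alpha$ in the dynamics for $\alpha\in(0,1)$, which makes the maximized Hamiltonian concave in $K$), the first-order conditions obtained via a Lagrangian argument will be both necessary and sufficient. Accordingly, I form the Lagrangian
\[
\mathcal{L} = \int_0^T\!\!\int_\Omega U(C)e^{-\rho t}\,dx\,dt + \Theta\!\int_\Omega K(x,T)\,dx
+ \int_0^T\!\!\int_\Omega \mu(x,t)\Bigl[\nabla(d(x)\nabla K) + AK^\alpha - \delta_K K - C - \partial_t K\Bigr]dx\,dt,
\]
where $\mu(x,t)$ is the (present-value) costate associated with the PDE constraint, and compute its Gateaux derivatives in $C$ and $K$.

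The variation with respect to $C$ immediately yields the pointwise optimality condition $U'(C)\,e^{-\rho t} = \mu$. For the variation with respect to $K$, the work lies in two integration-by-parts steps. First, Green's identity applied to the diffusion term $\mu\,\nabla(d\nabla\delta K)$ moves two spatial derivatives onto $\mu$ and produces boundary integrals on $\partial\Omega$; these vanish once one imposes the induced Neumann condition $d(x)\,\partial\mu/\partial n = 0$ (since $\delta K$ already satisfies the analogous condition). Second, integration by parts in time against $\mu\,\partial_t\delta K$ produces temporal boundary terms; the $t=0$ term vanishes because $\delta K(x,0)=0$, while the $t=T$ term must match the contribution from the terminal payoff $\Theta\int_\Omega K(x,T)\,dx$. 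Setting the resulting integrand to zero gives the adjoint PDE for $\mu$ in $\Omega\times(0,T)$, and collecting the terminal boundary contributions gives the transversality condition. The final step is the standard rescaling to the current-value costate $\lambda = \mu\, e^{\rho t}$, which absorbs the discount factor and converts the adjoint equation into the form stated in the proposition, with $U'(C)=\lambda$ becoming the optimality condition and the terminal condition expressed in terms of $\Theta$.

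For sufficiency, once the Hamiltonian system is satisfied by a candidate $(\tilde C,\tilde K,\tilde\lambda)$, concavity of the maximized Hamiltonian in $K$ (guaranteed by concavity of $U$ and of $K\mapsto AK^\alpha$) together with the Mangasarian-type argument in the PDE setting delivers global optimality; the linearity of the diffusion operator and of the terminal payoff in $K$ causes no extra trouble. The principal technical obstacle is carrying out the spatial integration by parts cleanly: one must justify moving the divergence-form operator $\nabla(d(x)\nabla\cdot)$ onto the costate and check that the surface integrals on $\partial\Omega$ genuinely vanish, which is exactly what forces the induced Neumann condition on $\lambda$. The remaining calculations (identifying $\partial f/\partial K = \alpha AK^{\alpha-1}-\delta_K$ in the adjoint equation and tracking the discount factor in the passage from $\mu$ to $\lambda$) are routine bookkeeping.
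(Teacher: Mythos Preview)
The paper does not actually prove this proposition: it is stated without proof, and the text immediately proceeds to numerical simulation (``Since analyzing explicitly the Hamiltonian system above is generally not possible \ldots''). So there is nothing to compare against line by line.

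Your proposal is the standard and correct route: form the Lagrangian with a present-value multiplier $\mu$, take Gateaux variations in $C$ and $K$, use Green's identity on the divergence-form diffusion term to move two spatial derivatives onto $\mu$ (which forces the Neumann condition on the costate), integrate by parts in time to obtain the adjoint PDE and the transversality condition $\mu(x,T)=\Theta$, and then pass to the current-value costate $\lambda=\mu e^{\rho t}$ to get the stated system. For sufficiency you invoke a Mangasarian-type argument, which requires concavity of the maximized Hamiltonian in $K$; you are right to flag that this needs $\alpha\le 1$ in addition to concavity of $U$, a hypothesis the paper leaves implicit. None of this is controversial, and it matches the derivation in the spatial-Ramsey literature the paper cites (e.g.\ Boucekkine--Camacho--Zou). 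Your write-up would in fact supply the proof the paper omits.
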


Since analyzing explicitly the Hamiltonian system above is generally not possible (unless we introduce restrictive assumptions), we now proceed with numerical simulations to illustrate the optimal behavior of capital and consumption. If we use the dynamic constraint and plug it into the objective function we obtain:
{$$
J(K) =\int_0^T \int_\Omega U\left(- {\partial K(x,t)\over \partial t} + \nabla \left(d_k(x) \nabla K(x,t) \right) +
A K(y,t)^\alpha  - \delta_K K(x,t)\right)e^{-\rho t} dx dt + \Theta \int_{\Omega} K(x,T) dx
$$}
Subject to
$$
\left\{
  \begin{array}{ll}
    -{\partial K(x,t)\over \partial t} + \nabla \left(d(x) \nabla K(x,t) \right) + A K(x,t)^\alpha - \delta_K K(x,t)\ge 0, & (x,t)\in \Omega\times (0,T) \\
    d(x) {\partial K\over \partial n}(x) = 0, & x\in\partial \Omega\\
    K(x,0) = K_0(x). & x\in \Omega
  \end{array}
\right.
$$
The direction derivative of $J$ along any feasible $h$ is given by
\begin{eqnarray*}
J'(K;h) & = &  \lim_{\delta\to 0} \frac{J(K+\delta h)- J(K)}{\delta} \\
& = & \int_0^T \int_\Omega U'(- {\partial K(x,t)\over \partial t} + \nabla \left(d(x) \nabla K(x,t) \right)  + A K^\alpha(x,t) - \delta_K K(x,t)) \\
& * & \left(-{\partial h(x,t)\over \partial t} + \nabla \left(d(x) \nabla h(x,t) \right) + A h^\alpha(x,t) dx - \delta_K h(x,t)\right)e^{-\rho t} dx dt + \Theta \int_{\Omega} h(x,T) dx \\
\end{eqnarray*}
We propose an algorithm to determine an approximation of the optimal solution. At each step this algorithm determines the direction of growth $h$ using the above calculated directional derivative $J'(K;h)$.
\begin{itemize}
  \item Given the value of the state variable $K_n(x,t)$, solve the following problem
$$
{\left\{
  \begin{array}{ll}
     - {\partial h(x,t)\over \partial t} + \nabla \left(d(x) \nabla h(x,t) \right) + A h^\alpha(x,t) - \delta_K h(x,t) = \\
      \left[U'\left(- {\partial K_n(x,t)\over \partial t} + \nabla \left(d(x) \nabla K_n(x,t) \right) + A K_n^\alpha(x,t) - \delta_K K_n(x,t)\right)\right]^{-1}
      \left(- \Theta {\partial h\over \partial t}  +1\right)e^{\rho t}
            , & (x,t)\in \Omega\times (0,T) \\
    d(x) {\partial K\over \partial n}(x) = 0, & x\in\partial \Omega\\
    h(x,0) = 0. & x\in \Omega
  \end{array}
\right.}
$$
  \item Determine $\delta>0$ that corresponds to the maximum increment of $J$ along the direction $h$
  \item Update $K_{n+1} = K_n + \delta h$
  \item If $|J(K_{n+1})-J(K_n)|<\epsilon$ then stop otherwise go to the top.
\end{itemize}

The following result shows that $J$ is increasing along the sequence generated by the above algorithm. The implementation of the above algorithm generates a sequence of functions $K_n$ along which the objective function is increasing.

\begin{proposition}
If $\delta$ is small then $J(K_{n+1}) \ge J(K_n)$, $\forall n\ge 0$.
\end{proposition}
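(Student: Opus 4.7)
The plan is the classical ascent-direction argument: show that the function $h$ produced by the first step of the algorithm satisfies $J'(K_n;h)>0$, and then invoke the first-order expansion of $J$ to conclude that a sufficiently small positive step along $h$ cannot decrease $J$.

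First I would substitute the defining PDE for $h$ directly into the expression for $J'(K_n;h)$ displayed just above the algorithm. Writing $\star := -\partial_t K_n + \nabla(d(x)\nabla K_n) + AK_n^\alpha - \delta_K K_n$, the factor $U'(\star)$ appearing in $J'(K_n;h)$ cancels with the $[U'(\star)]^{-1}$ introduced by the PDE, and $e^{-\rho t}$ cancels with $e^{\rho t}$. This collapses the space--time integral to
\begin{equation*}
\int_0^T\!\!\int_\Omega \Bigl(-\Theta\,\tfrac{\partial h}{\partial t}+1\Bigr)\,dx\,dt + \Theta\int_\Omega h(x,T)\,dx.
\end{equation*}
A single integration by parts in $t$, together with the initial condition $h(x,0)=0$ prescribed by the algorithm, cancels the two $\Theta$-terms and leaves $J'(K_n;h)=T|\Omega|>0$; hence $h$ is an ascent direction at $K_n$.

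With $J'(K_n;h)>0$ in hand, the conclusion follows from the definition of the directional derivative, $J(K_n+\delta h)-J(K_n)=\delta\,T|\Omega|+o(\delta)$, which is nonnegative for every $\delta\in(0,\bar\delta_n]$ for some $\bar\delta_n>0$. Since step two of the algorithm selects $\delta$ to maximise $J$ along the ray $\{K_n+\delta h:\delta\ge 0\}$, the chosen update necessarily satisfies $J(K_{n+1})\ge J(K_n)$.

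The main obstacle is rigour rather than algebra: establishing the $o(\delta)$ expansion requires Gâteaux differentiability of $J$ along $h$, which in turn needs enough regularity of $U$, $K_n$, and $h$ for the chain rule and dominated convergence to go through, plus uniform control of the remainder. A secondary issue is feasibility of $K_n+\delta h$: the homogeneous Neumann and zero initial conditions built into $h$ preserve the boundary and initial constraints automatically, while preservation of the inequality $-\partial_t K + \nabla(d(x)\nabla K) + AK^\alpha - \delta_K K\ge 0$ follows by continuity provided $K_n$ lies strictly in the interior of the admissible set.
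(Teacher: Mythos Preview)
Your proposal is correct and follows essentially the same route as the paper's proof: substitute the defining PDE for $h$ into $J'(K_n;h)$ so that $U'(\star)$ and $e^{-\rho t}$ cancel, then integrate $\partial_t h$ in time and use $h(x,0)=0$ to obtain $J'(K_n;h)=T\,\mu(\Omega)$, whence the first-order expansion gives $J(K_{n+1})-J(K_n)\ge 0$ for small $\delta$. Your additional remarks on the regularity needed for the $o(\delta)$ expansion and on feasibility of $K_n+\delta h$ go beyond what the paper records, but they are pertinent caveats rather than deviations from the argument.
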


\begin{proof}
Computing we have:
\begin{eqnarray*}
J(K_{n+1}) - J(K_n) & = & \delta J'(K_n;h) + o(\delta) \\
& = & \delta \left(\int_0^T \int_\Omega \left(- \Theta {\partial h\over \partial t}  +1\right)dx dt + \Theta \int_{\Omega} h(x,T) dx \right) + o(\delta) \\
& = & \delta  \left(- \Theta \int_\Omega h(x,T)-h(x,0) dx + T \mu(\Omega) + \Theta \int_{\Omega} h(x,T) dx \right) + o(\delta) \\
& = & \delta \left(T \mu(\Omega)+{o(\delta)\over \delta}\right)\ge 0
\end{eqnarray*}
and this last passage relies on the boundary condition $h(x,0)=0$.
\end{proof} \medskip

We now apply the above algorithm to the following model where $U(C)=[(1+C(x,t))^{2\over 3} -1]$, $T=1$, $\Omega=[0,1]$, $d_k(x)=1-0.5 x^2$, $\alpha=1$, $\delta_K=0.01$.
$$
\max \left(\int_0^1 \int_0^1 [(1+C(x,t))^{2\over 3} -1] dx dt, \int_0^1 K(x,1) dx\right)
$$
Subject to
$$
{\left\{
  \begin{array}{ll}
    {\partial K(x,t)\over \partial t} = \nabla \left((1-0.5 x^2) \nabla K(x,t) \right) + K(x,t) - C(x,t)- 0.01 K(x,t), & (x,t)\in [0,1]\times (0,1) \\
    (1-0.5 x^2) {\partial K\over \partial n}(x) = 0, &  x=0,1\\
    K(x,0) = 1+x. & x\in [0,1]
  \end{array}
\right.}
$$

We have implemented the above algorithm in COMSOL. Firstly, we run the algorithm with $\Theta=0$. The values of the objective function after three iterations are: $J_0=0.579453721074241$, $J_1=0.6059822543917376$, $J_2=0.6287663921318654$, $J_3=0.6534865860743782$. Then we used $\Theta=0.1$. The values of the objective functions after three iterations are given by: $J_0=0.5944537210742411$, $J_1=0.6200510708312351$, $J_2= 0.6404132122774402$, and $J_3=0.6570096551286734$. Figure \ref{Fig3} shows the spatio-temporal behavior of $K$ in the $\Theta=0$ case (left panel) and $\Theta=0.1$ case (right panel). It clearly shows our simple algorithm is capable to handle an optimal control problem on partial differential equations without requiring specific restrictive assumptions on the functional forms of the utility and production functions.
\begin{figure}[h]
  \centering
  \includegraphics[width=6cm]{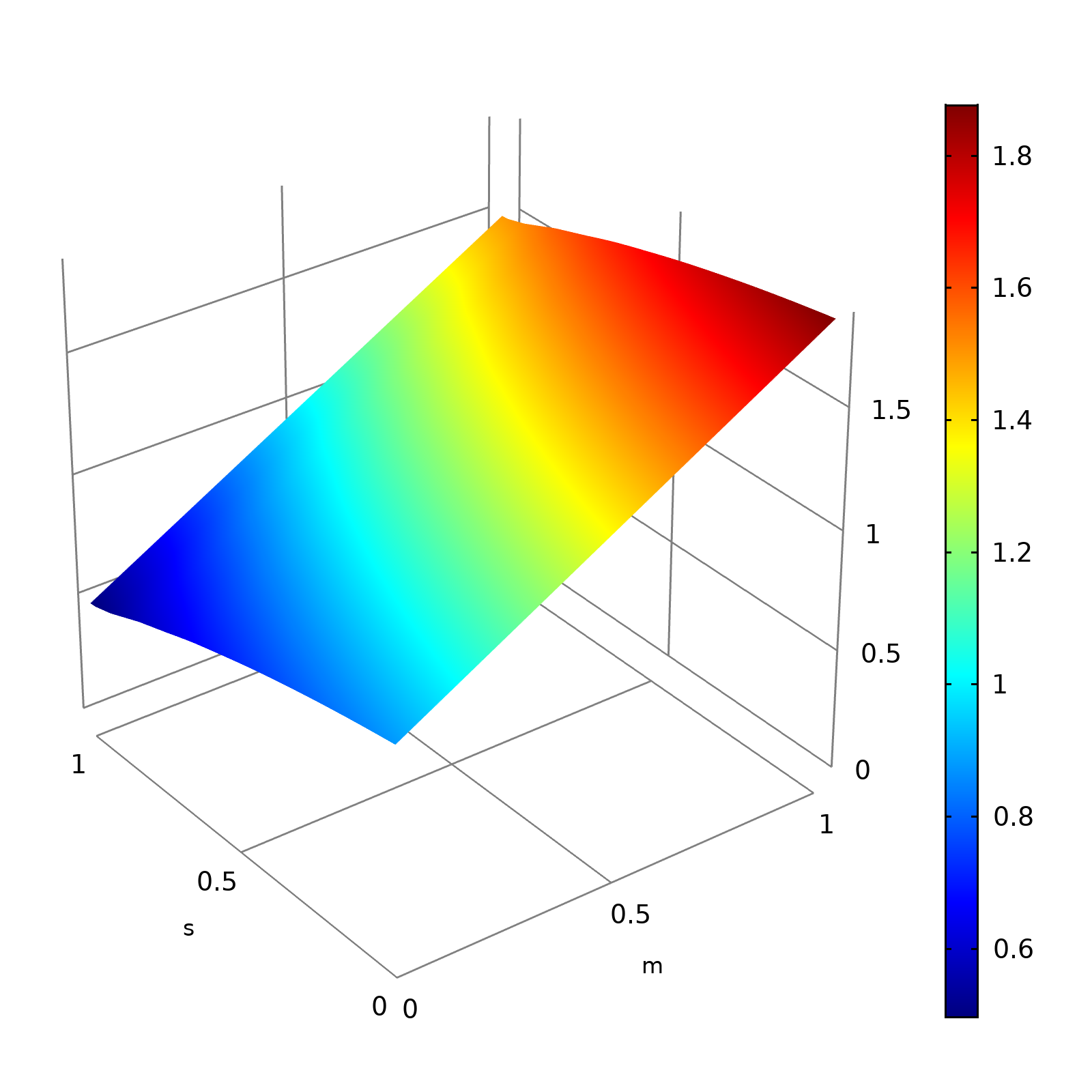}
  \includegraphics[width=6cm]{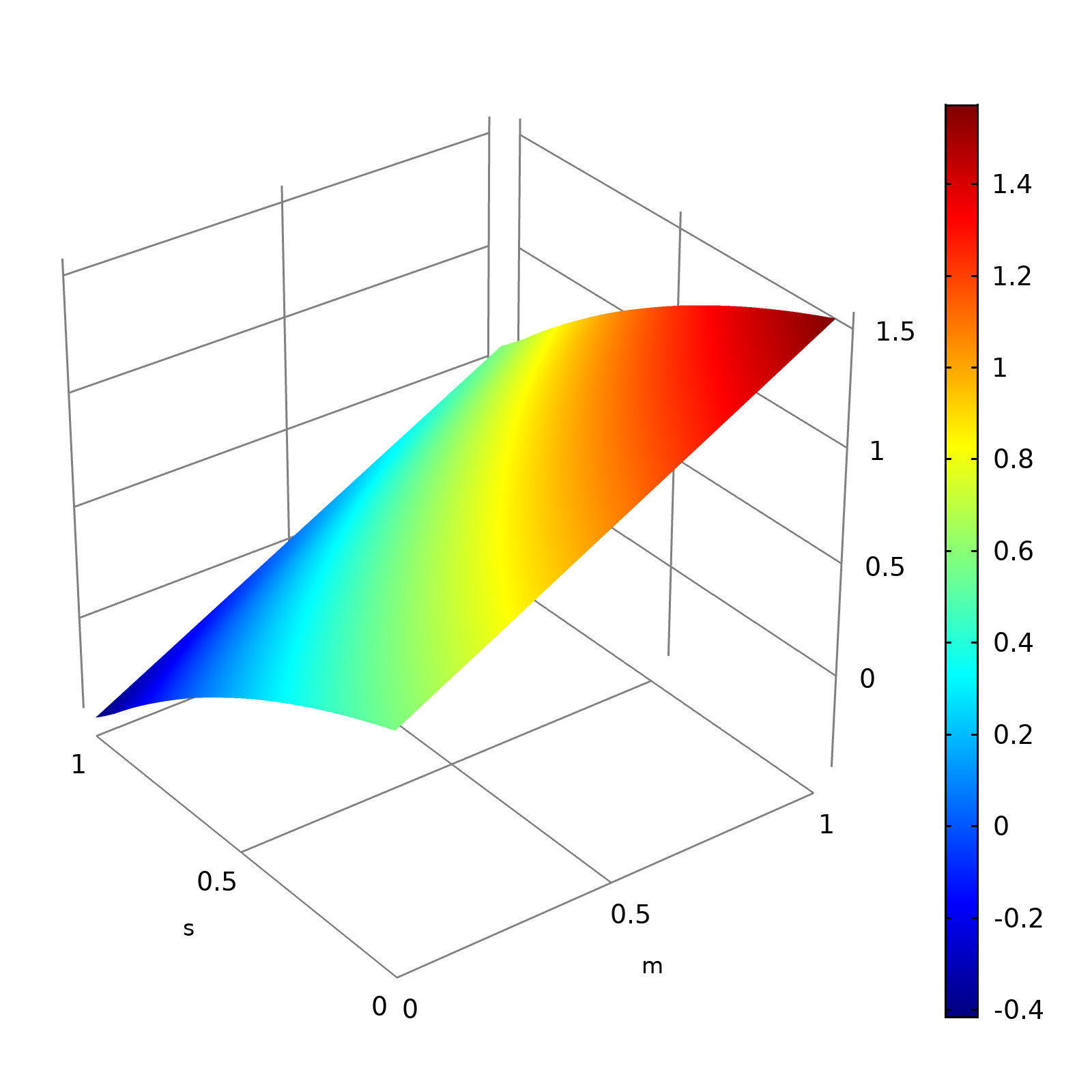}
  \caption{Spatio-temporal evolution of capital in the $\Theta=0$ case (left) and $\Theta=0.1$ case (right).}\label{Fig3}
\end{figure}

In order to construct the Pareto frontier we proceed as follows:
\begin{itemize}
  \item For each $\Theta$ we solve the “scalarized” problem $\max J_1 + \Theta J_2$,
  \item Let $K_\Theta(x,t)$ be the optimal solution, we plug it into the two separated criteria and get the pair of values $(J_1,J_2)$,
  \item We plot the pair $(J_1,J_2)$.
\end{itemize}
The Pareto frontier is illustrated in Figure \ref{Fignew}, from which we can clearly observe that it is bowed outward as expected, give the nature of the trade off between the two criteria.
\begin{figure}[h]
  \centering
  \includegraphics[width=6cm]{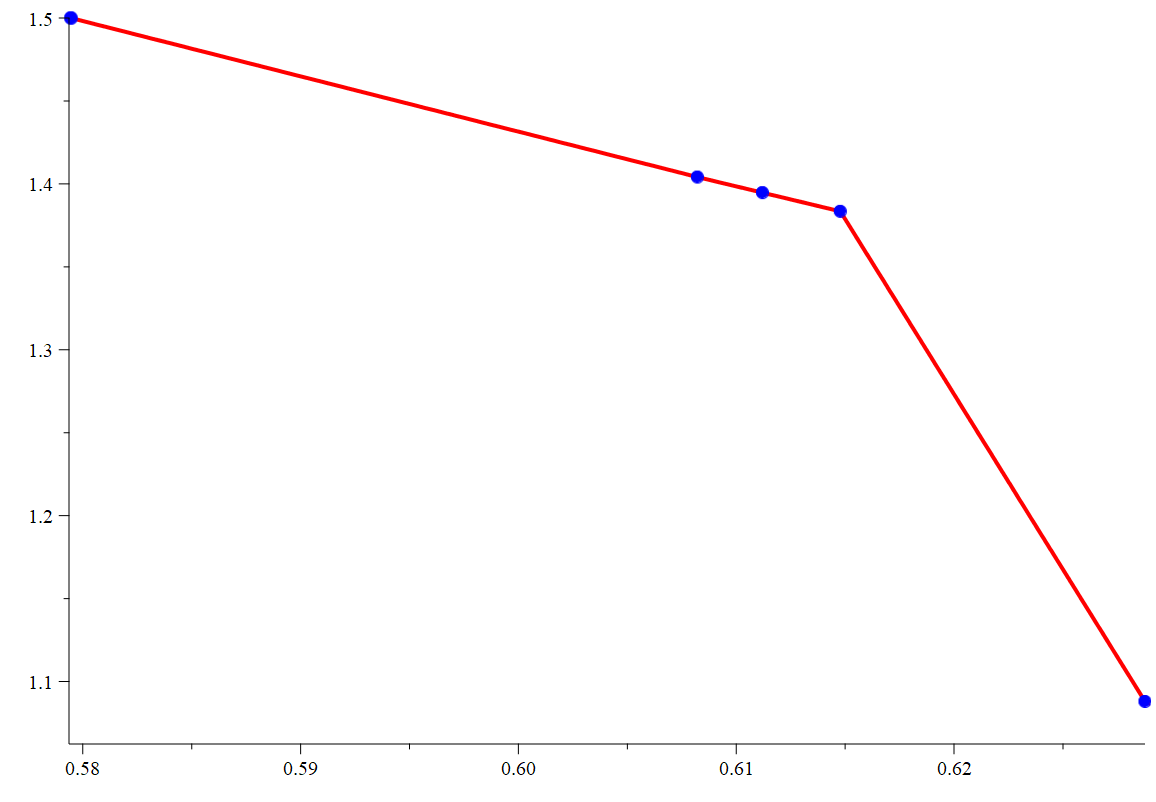}
  \caption{Piecewise linear Pareto frontier under scalarization.}\label{Fignew}
\end{figure}

\section{Further Numerical Experiments} \label{numsec}

In this section we discuss some further numerical experiments that can be implemented using different model formulations based on the $\epsilon$-constraint and the GP approaches.

\subsection{Experiment I}
\label{exp1}

We now discuss a model formulation that has been obtained by applying the $\epsilon$-constraint method. In this formulation we suppose that the DM maximizes his intertemporal utility and the level of physical capital at the final horizon $T$ is included in the set of constraints. Given a positive value of $\epsilon$, the model can be written as:

\begin{equation}
\max \int_0^T \int_\Omega U(C(x,t))e^{-\rho t} dx dt\\
\end{equation}
Subject to:
\begin{equation}
\label{eq_system2}
{\left\{
  \begin{array}{ll}
    \int_{\Omega} K(x,T) dx \ge \epsilon, & \\
    {\partial K(x,t)\over \partial t} = \nabla \left(d(x) \nabla K(x,t) \right) + A K(x,t)^\alpha - \delta_K K(x,t) - C(x,t), & (x,t)\in \Omega\times (0,T) \\
    d(x) {\partial K\over \partial n}(x) = 0, &  x\in\partial \Omega\\
    K(x,0) = K_0(x). & x\in \Omega
  \end{array}
\right.}
\end{equation}

This model can be solved using the same approach and a slightly modified version of the algorithm used in the linear scalarization case. If we use the dynamic constraint to express the level of consumption $C(x,t)$ as function of $K(x,t)$, the model boils down to:

\begin{equation}
\max \int_0^T \int_\Omega U\left({\partial K(x,t)\over \partial t} - \nabla \left(d(x) \nabla K(x,t) \right) - A K(x,t)^\alpha + \delta_K K(x,t)\right)e^{-\rho t} dx dt\\
\end{equation}
Subject to:
\begin{equation}
\label{eq_system2}
{\left\{
  \begin{array}{ll}
    \int_{\Omega} K(x,T) dx \ge \epsilon, & \\
    -{\partial K(x,t)\over \partial t} + \nabla \left(d(x) \nabla K(x,t) \right) + A K(x,t)^\alpha - \delta_K K(x,t)\ge 0, & (x,t)\in \Omega\times (0,T) \\
    d(x) {\partial K\over \partial n}(x) = 0, &  x\in\partial \Omega\\
    K(x,0) = K_0(x). & x\in \Omega
  \end{array}
\right.}
\end{equation}

As a numerical experiment, we now apply the above model when $U(C)=[(1+C(x,t))^{2\over 3} -1]$, $T=1$, $\Omega=[0,1]$, $d_k(x)=1-0.5 x^2$, $\alpha=1$, $\delta_K=0.01$. The model is formulated as:
$$
\max \int_0^1 \int_0^1 \left[\left(1+-{\partial K(x,t)\over \partial t} + \nabla \left((1-0.5 x^2) \nabla K(x,t) \right) + K(x,t) - 0.01 K(x,t)\right)^{2\over 3} -1\right] dx dt
$$
Subject to
$$
\small{\left\{
  \begin{array}{ll}
     \int_0^1 K(x,1) dx \ge 1.3 & \\
     -{\partial K(x,t)\over \partial t} + \nabla \left((1-0.5 x^2) \nabla K(x,t) \right) + K(x,t) - 0.01 K(x,t) \ge 0, & (x,t)\in [0,1]\times (0,1) \\
    (1-0.5 x^2) {\partial K\over \partial n}(x) = 0, &  x=0,1\\
    K(x,0) = 1+x. & x\in [0,1]
  \end{array}
\right.}
$$

The application of the above algorithm provides $J(1)=0.57945$,  $J(2)=0.60822$, $J(3)=0.63879$. The optimal behaviour of $K$ is shown in Figure \ref{Fignew2}.

\begin{figure}[h]
  \centering
  \includegraphics[width=6cm]{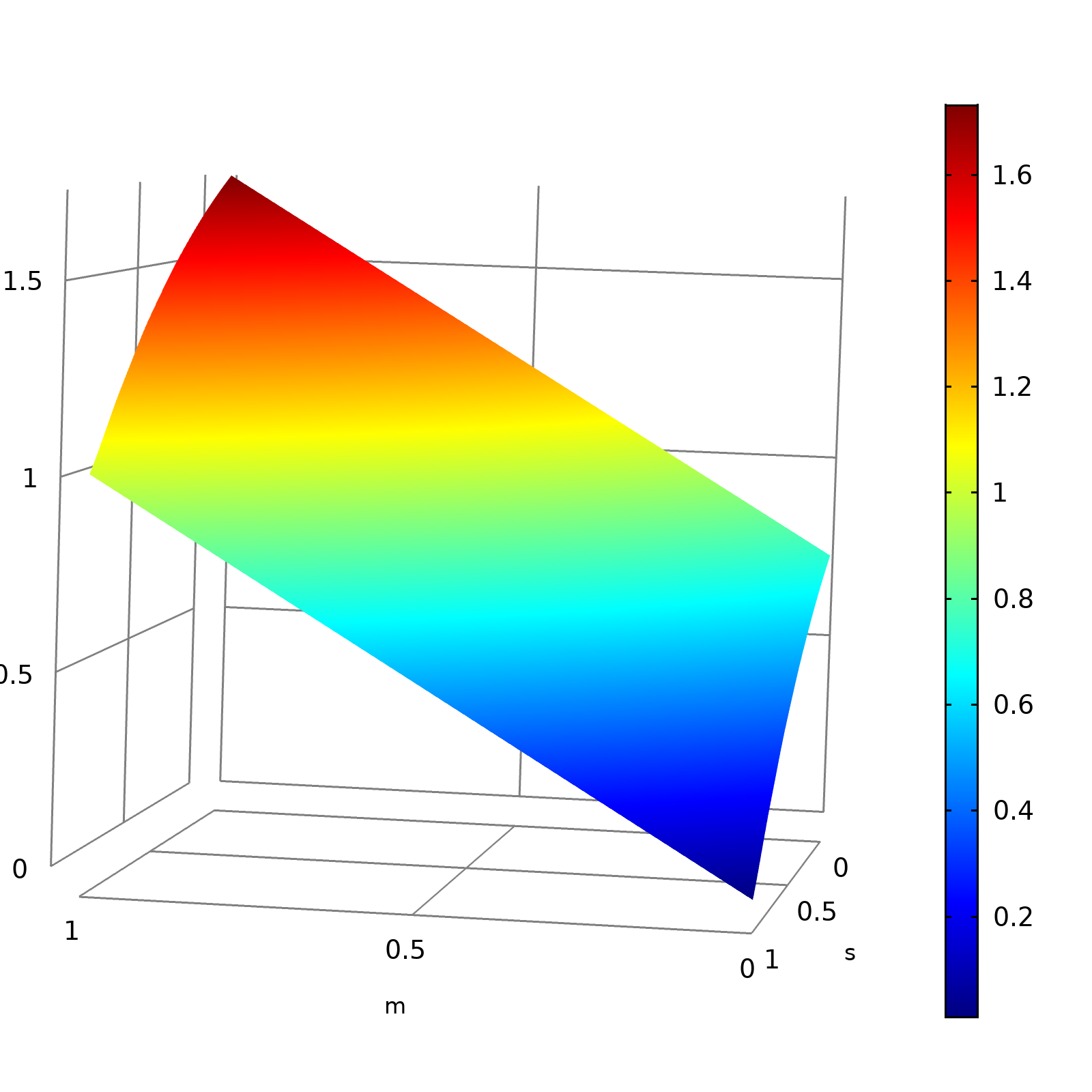}
  \caption{The evolution of $K$}\label{Fignew2}
\end{figure}

\subsection{Experiment II}
\label{exp2}

In this model formulation we still use the $\epsilon$-constraint method but we assume a linear utility function $U(C)=C$, and $\alpha=1$.
In this context the spatial model can be reduced to a one-dimensional optimal control model by introducing the average of consumption and physical capital at $t$,
$C_M(t)$ and $K_M(t)$, defined as
$$
C_M(t) = \int_\Omega C(x,t)dx
$$
and
$$
K_M(t) = \int_\Omega K(x,t) dx
$$
and the average amount of consumption per unit amount $c_M(t)$ as
$$
c_M(t) = {C_M(t)\over K_M(t)} \in [0,1]
$$
Let us suppose that a pair $(K,C)$ solves the model:

\begin{equation}
\max_{C(x,t),K(x,t)} \int_0^T \int_\Omega C(x,t)e^{-\rho t} dx dt\\
\end{equation}
Subject to:
\begin{equation}
\label{eq_system2}
{\left\{
  \begin{array}{ll}
    \int_{\Omega} K(x,T) dx \ge \epsilon, & \\
    {\partial K(x,t)\over \partial t} = \nabla \left(d(x) \nabla K(x,t) \right) + A K(x,t)^\alpha - \delta_K K(x,t) - C(x,t), & (x,t)\in \Omega\times (0,T) \\
    d(x) {\partial K\over \partial n}(x) = 0, &  x\in\partial \Omega\\
    K(x,0) = K_0(x). & x\in \Omega
  \end{array}
\right.}
\end{equation}

Then, by easy calculations and taking the integral of the contraints, the pair $(K_M,c_M)$ solves the following one:

\begin{equation}
\max_{c_M(t),K_M(t)} \int_0^T c_M(t) K_M(t) e^{-\rho t} dt\\
\end{equation}
Subject to
\begin{equation}
\label{eq_system2}
{\left\{
  \begin{array}{ll}
    \dot{K_M}(t) = A K_M(t) - \delta_K K_M(t) - c_M(t) K_M(t), & t\in (0,T) \\
    K_M(T) \ge \epsilon, & \\
    K_M(0) = \int_\Omega K_0(x) dx. &
  \end{array}
\right.}
\end{equation}

This is classical calculus of variations model with a bang-bang control and it can be solved in closed-form to determine the optimal paths of $C_M(t)$ and $K_M(t)$.

\subsection{Experiment III}
\label{exp3}

This third experiment is devoted to a different formulation that has been obtained by applying the GP approach. Here we also assume a linear utility function $U(C)=C$, and $\alpha=1$. We also suppose that all weights are equal and then normalized to $1$ and that $T$ is an integer number. Given two goals $g_1$ and $g_2$ for the two criteria $J_1$ and $J_2$, the model reads as:
\begin{equation}
\min \delta_1^+ + \delta_1^- + \delta_2^+ + \delta_2^-
\end{equation}
Subject to:
\begin{equation}
\label{eq_system3}
{\left\{
  \begin{array}{ll}
  \int_\Omega K(x,T) dx - \delta_1^+ + \delta_1^- = g_1 \\
  \int_0^T \int_\Omega C(x,t) e^{-\rho t} dx dt - \delta_2^+ + \delta_2^- = g_2, & \\
  {\partial K(x,t)\over \partial t} = \nabla \left(d(x) \nabla K(x,t) \right) + A K(x,t) - \delta_K K(x,t) - C(x,t), & (x,t)\in \Omega\times (0,T) \\
  d(x) {\partial K\over \partial n}(x) = 0, &  x\in\partial \Omega\\
  K(x,0) = K_0(x). & x\in \Omega
  \end{array}
\right.}
\end{equation}

To simplify the model, we take the integral of the PDE and reduce the analysis to the variables $K_M$ and $C_M$ introduced in the previous experiment. The model boils down to:

\begin{equation}
\min \delta_1^+ + \delta_1^- + \delta_2^+ + \delta_2^-
\end{equation}
Subject to:
\begin{equation}
\label{eq_system3}
{\left\{
  \begin{array}{ll}
  K_M(T) - \delta_1^+ + \delta_1^- = g_1 \\
  \int_0^T C_M(t) e^{-\rho t} dt - \delta_2^+ + \delta_2^- = g_2, & \\
  \dot{K_M}(t) = (A-\delta_K) K_M(t) - C_M(t), & t\in (0,T) \\
  K_M(0) = \int_\Omega K_0(x) dx. &
  \end{array}
\right.}
\end{equation}
By discretizing the time, and introducing the discrete variables $K_M(i)$, $C_M(i)$, $i=0...T$, the model can be written as
\begin{equation}
\min \delta_1^+ + \delta_1^- + \delta_2^+ + \delta_2^-
\end{equation}
Subject to:
\begin{equation}
\label{eq_system3}
{\left\{
  \begin{array}{ll}
  K_M(T) - \delta_1^+ + \delta_1^- = g_1 \\
  \sum_{i=0}^T C_M(i) e^{-\rho i} - \delta_2^+ + \delta_2^- = g_2, & \\
  K_M(i+1) - K_M(i) = (A-\delta_K) K_M(i) - C_M(i), & i=0...T-1 \\
  K_M(0) = \int_\Omega K_0(x) dx. &
  \end{array}
\right.}
\end{equation}
This is a linear optimization model that can be solved by standard optimization solvers such LINGO or MATLAB.


\section{Conclusion}  \label{sec:conc}

The introduction of intergenerational equity considerations associated with sustainability issues into a traditional macroeconomic setting transforms the typical unicriterion macroeconomic problem into a bi-criteria optimization problem, which can be analyzed through the lenses of the multicriteria optimization techniques developed in the operations research literature. Recently traditional macroeconomic problems have been extended to introduce a spatial dimension, allowing to consider the extent to which spatial heterogeneity and spatial spillovers affect economic outcomes not only over time but also across space. The goal of this papers consists thus to merge these two different lines of research by analyzing a simple macroeconomic setting to account for intergenerational equity and spatial spillovers from the operations research point of view. In particular, we show that our macroeconomic problem can be reformulated as a multicriteria problem by relying on different techniques (scalarization, $\epsilon$-constraint method and goal programming), and such different formulations of the problem can be solved through numerical methods, which allow us to illustrate the nature of the trade-off between the two criteria and to derive the Pareto-frontier in some specific cases and parametrizations.

Our paper represents one of the first attempts to bridge the economics and the operations research literature, but still much needs to be done in order to develop further the possible synergies existing between these two different disciplines. In particular, apart from scalarization, $\epsilon$-constraint method and goal programming approaches, several other techniques developed in the operations research literature can be applied in similar macroeconomic contexts, especially in the context of stochastic or fuzzy multiple objective optimization. It would also be worth exploring the use of vectorization algorithms and methods that tackle the MOP model directly such as, for instance, genetic algorithms. Moreover, apart from the applications to macroeconomic questions, similar multicriteria approaches can be applied to other economic problems arising in environmental economics, game theory and cost-benefit analysis.

\end{document}